\newtheorem{lemma}{Lemma}
\theoremstyle{plain}
\theoremstyle{plain}
\theoremstyle{plain}
\providecommand{\lemmaname}{Lemma}
\providecommand{\propositionname}{Proposition}
\providecommand{\theoremname}{Theorem}
\providecommand{\lemmaname}{Lemma}
\providecommand{\propositionname}{Proposition}
\newcommand{\hypgeo}[2]{%
  {\vphantom{F}}_{#1}\kern-\scriptspace F_{#2}%
}
\DeclareMathOperator{\erf}{erf}
\begin{document}
\title{A Tractable Handoff-aware Rate Outage Approximation with Applications to THz-enabled Vehicular Network Optimization}
\author{Mohammad Amin Saeidi, {\em Graduate Student Member, IEEE}, Haider Shoaib, {\em Student Member, IEEE}, and \\Hina Tabassum, {\em Senior Member, IEEE} \\
 \IEEEauthorblockA{\IEEEauthorrefmark{1} Department of Electrical Engineering and Computer Science, York University, ON, Canada}%
\thanks{M.A. Saeidi, H. Shoaib and H.~Tabassum are with York University, ON, Canada  (e-mail: amin96a@yorku.ca, haider98@my.yorku.ca, hinat@yorku.ca). This research was supported by a Discovery Grant funded by the Natural Sciences and Engineering Research Council of Canada.}
}

\maketitle

\begin{abstract}
In this paper, we first develop a tractable mathematical model of the handoff (HO)-aware rate outage experienced by a typical connected and autonomous vehicle (CAV) in a given THz vehicular network. The derived model captures the impact of line-of-sight (LOS) Nakagami-m fading channels, interference, and molecular absorption effects. We first derive the statistics of the interference-plus-molecular absorption noise ratio and demonstrate that it can be approximated by Gamma distribution using Welch-Satterthwaite approximation. Then, we show that the distribution of signal-to-interference-plus-molecular absorption noise ratio (SINR) follows a generalized Beta prime distribution. Based on this, a closed-form HO-aware rate outage expression is derived. Finally, we formulate and solve a CAVs' traffic flow maximization problem to optimize the base-stations (BSs) density and speed of CAVs with collision avoidance, rate outage, and CAVs' minimum traffic flow constraint. The CAVs' traffic flow is modeled using Log-Normal distribution. Our numerical results validate the accuracy of the derived expressions using Monte-Carlo simulations and discuss useful insights related to optimal BS density and CAVs' speed as a function of crash intensity level, THz molecular absorption effects, minimum road-traffic flow and rate requirements, and maximum speed and rate outage limits.

\end{abstract}

\begin{IEEEkeywords}
Connected automated vehicles, vehicular networks, terahertz communications, handoffs, handoff-aware data rate, traffic flow, optimization.
\end{IEEEkeywords}

\section{Introduction}

Connected and autonomous vehicles (CAVs) are emerging as a key technology to enable next-generation transportation infrastructure and travel behavior \cite{10008649}. To monitor the surrounding traffic environment and make effective driving decisions in real-time, CAVs rely heavily on onboard sensing and communication technology  \cite{hina2}. Nevertheless, with limited  resources, a given CAV cannot typically guarantee the response time in rapidly varying traffic conditions, resulting in low driving efficiency and even accidents. Thus, high-speed and ultra-reliable 5G/6G network infrastructure will be critical for faster  V2I communications\cite{10179155, 9770093}. 
5G/6G communications leverage transmissions in millimeter waves [mmwaves] and Terahertz [THz] frequencies to offer tremendous data rates (in the order of multi-Giga-bits-per-second). However, the transmissions are susceptible to blockages and molecular absorption in the environment. Also, with the increasing velocity of CAVs, frequent handoffs (HOs) will happen which will reduce the achievable data rate. Thus, it is crucial to characterize the HO-aware performance of a typical vehicle in a THz vehicular network. In addition, optimizing the speed of CAV to maintain a \textit{required road-traffic flow} and \textit{reliable connectivity} with \textit{collision avoidance} is of prime relevance.

Existing research works considered analyzing the HO-aware data rate performance in conventional RF networks without CAV traffic flow or collision avoidance considerations. In \cite{7827020}, Arshad et al. used HO-cost to derive useful expressions for HO-aware data rate for mobile users. In \cite{7571101}, Ibrahim et al. used a C-plane and U-plane split architecture to derive per-user HO-aware data rate. Furthermore, in \cite{6477064}, Lin et al. used stochastic geometry to formulate HO rate where BSs were randomly distributed. In \cite{hina3}, the authors considered the outage and rate analysis considering Rayleigh fading channels and traffic flow maximization in a conventional RF network.
Recently, some research works considered THz-enabled vehicular networks. In \cite{Tanvir-Mobility}, the authors proposed mobility-aware expressions for coverage probability in a two-tier RF-THz network. The authors have suggested a reinforcement learning method for a V2I network and autonomous driving rules considering both RF and THz base-stations (BSs) in \cite{10001396}.
{The study in \cite{V2V-TWCOM} developed a framework by taking into account the best locations for antennas to assess the performance of multi-hop V2V communication that operates in sub-THz bands.}


The main gap in each of the mentioned works was that they did not consider analyzing the HO-aware rate outage in a THz communication vehicular network while considering line-of-sight (LOS) Nakagami-m fading channel, interference, and  molecular absorption noise. 

In this paper, we  characterize a closed-form tractable HO-aware rate outage expression of a typical CAV in a THz network considering Nakagami-m fading channels, log-normal distribution of the spacing between CAVs, Beer's Lambert path-loss model, interference, and molecular absorption noise.  We first characterize the statistics of the interference-plus-molecular absorption noise and approximate it with Gamma distribution using \textit{Welch-Satterthwaite approximation}. Then, we show that the distribution of SINR can be characterized as a generalized Beta prime distribution. Based on this, a closed-form rate outage expression is derived. Finally, a  traffic flow maximization problem with rate outage, collision avoidance, and traffic flow constraints is formulated and solved to optimize the CAV speed and TBS density. Numerical results validate the accuracy of the derived expressions.


\section{System Model and Performance Metrics}

\begin{figure}
    \label{sys model}
    \includegraphics[width=0.45\textwidth]{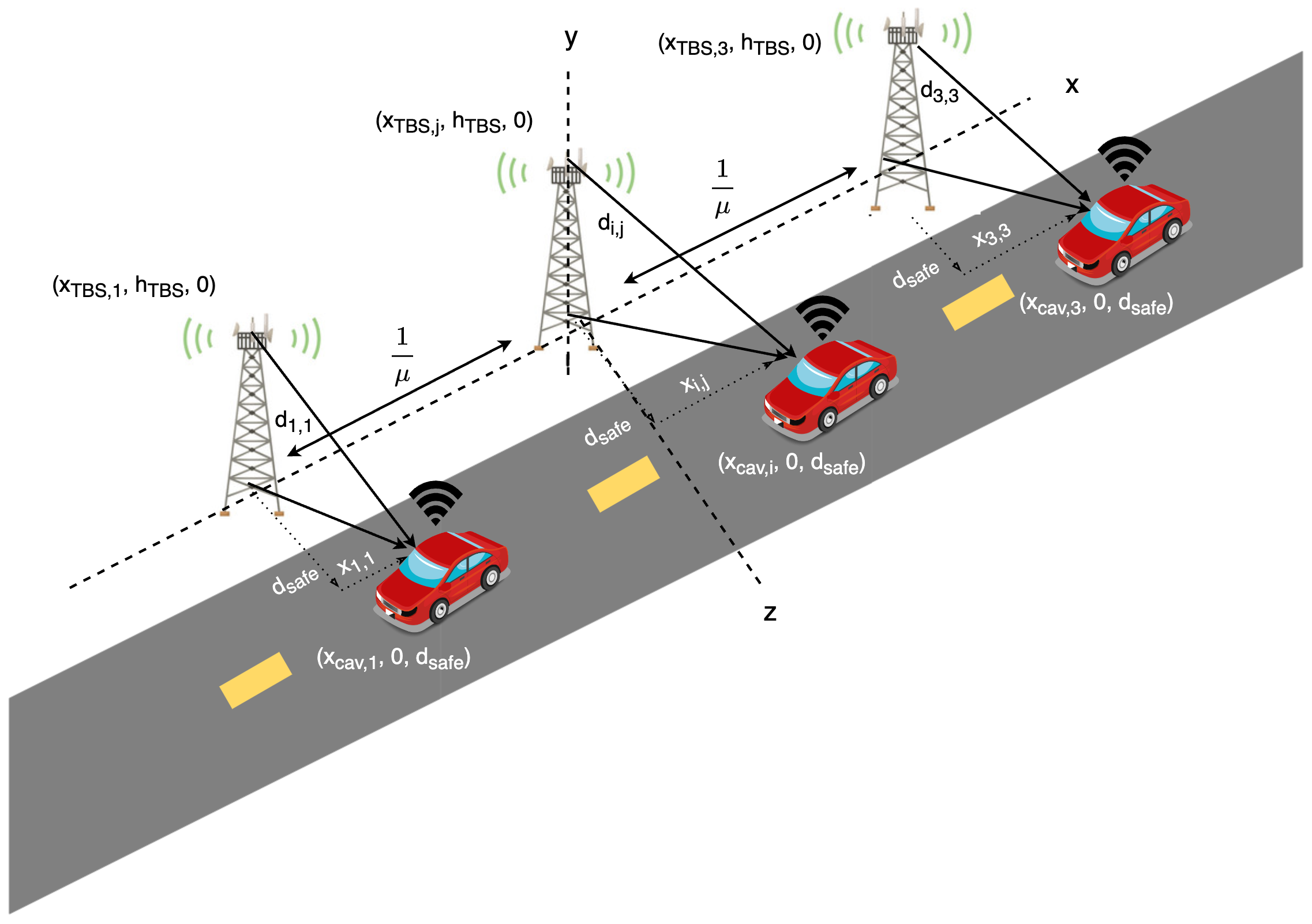}
    \centering
    \caption{V2I communication model for CAVs in THz network.}
\end{figure}

{We consider a CAV exclusive corridor equipped with $N$ TBSs alongside the corridor with a fixed length $L$ with $N_c$ CAVs of the same type. TBSs are deployed at a certain distance $d_{\mathrm{safe}}$ with a density $\mu = N/L$ TBSs per unit distance. The distance between any two TBSs is $1/\mu$ as demonstrated in Fig.~1. Furthermore, we define the density of CAVs on the road as $k$ CAVs per unit distance, and the CAVs' speed is denoted as $v$. Considering the theory of macroscopic traffic flow, we can define instantaneous traffic flow as $q=kv$ vehicles per unit time \cite{MAHNKE20051}. Furthermore, we introduce the spacing between vehicles $s$ where $k$ has an inverse relationship $k = 1/s$. Finally, with a given PDF of the density of vehicles $k$ on the corridor, i.e., $f_K(k)$, the  traffic flow can be formulated as:}
\begin{equation}
\label{flow}
    Q=\int_0^{\infty} k v f_K(k) \mathrm{d}k.
\end{equation}
The spacing between vehicles $s$ is considered to follow a log-normal distribution which is accurate for daytime hours \cite{4346442}.  The PDF of $s$ is thus given as follows:
\begin{equation}
   f_S(s) = \frac{1}{s\sigma_{\mathrm{LN}} \sqrt{2\pi}}\exp{\left(-\frac{(\ln{(s)} - \mu_{\mathrm{LN}})^2}{2\sigma_{\mathrm{LN}}^2}\right)}, 
\end{equation}
where $\mu_{\mathrm{LN}}$ and $\sigma_{\mathrm{LN}}$ are the logarithmic average and scatter parameters of log-normal distribution, respectively.

Therefore, \eqref{flow} can be rewritten as follows:
\begin{equation}
\label{avg flow}
    Q=\int_0^{\infty} \frac{1}{s} v f_S(s) \mathrm{d}s = v \exp{\left(\frac{\sigma^2_{\mathrm{LN}} - 2\mu_{\mathrm{LN}}}{2} \right)}.
\end{equation}

{We assume nearest TBS association for the connection between the CAVs and TBSs. In terms of received signal power at the $i$-th CAV, we formulate the expression assuming distance-based path-loss and short-term Nakagami fading at the transmission channel with the following expression:}
\begin{align}
\label{signal TBS}
    S_{i,j} &= P^{\text{tx}}_j G^{\textrm{tx}}G^{\textrm{rx}} \left(\frac{c}{4\pi f d_{i,j}}\right)^2  e^{-k(f) d_{i,j}} \chi_{i}\nonumber \\
    &=  P^{\text{tx}}_j \zeta d^{-2}_{i,j}e^{-k(f) d_{i,j}} \chi_{i},
\end{align}
 
In addition, we use \eqref{signal TBS} to formulate the recieved SINR at the $i$-th CAV from the $j$-th TBS with the following expression:
\begin{equation}
\label{sinr TBS}
    \mathrm{SINR}_{i,j} = \frac{S_{i,j}}{I_{i} + \phi_i + \sigma^2},
\end{equation}
where $P_j^{\mathrm{tx}}$ is the transmit power of the TBS $j$, $\zeta = G^{\textrm{tx}}G^{\textrm{rx}} \left(\frac{c}{4 \pi f}\right)^2$, $c$ and $f$ represent the speed of light and THz carrier frequency,
respectively. {$G^{\textrm{tx}}$ and $G^{\textrm{rx}}$ denote the antenna gain of TBSs and CAVs, respectively}, $d_{i,j}$ represents the distance between the $j$-th TBS and $i$-th CAV, $k(f)$ is the molecular absorption coefficient at carrier frequency $f$, and $\chi_{i, j}$ represents the fading channel of user $i$ modeled with Nakagami distribution where the power of the Nakagami fading channel follows Gamma distribution. The Nakagami model is suitable for THz transmission as it can capture communication environments with different LOS and non-LOS components using its
fading severity parameter, $m$.  The Nakagami-m distributed channel becomes Rayleigh distributed when $m$ = 1 and it can model Rician fading when $m = (K +1)2/(2K + 1)$, where $K$ is the ratio of the power in the LOS part to the power in the different multi-path elements (non-LOS). Furthermore, $\sigma^2$ is the thermal noise power at the receiver. 

Note that $I_{i} = \sum_{k, k \neq j}P^{\mathrm{tx}}_k \zeta  F d^{-2}_{i,k} \chi_{i,k}$ is the cumulative interference at the $i$-th CAV, which accounts for both interference and absorption noise from interfering TBSs. $F = F_{\mathrm{tx}} F_{\mathrm{rx}} = \frac{\theta_{\mathrm{tx}} \theta_{\mathrm{rx}}}{4 \pi^2}$ is the probability of alignment between the main lobes of the interferer and the $i$-th CAV assuming negligible side-lobe gains. 
{Finally, $\phi_i = P^{\text{tx}}_j \zeta d^{-2}_{i,j}(1-e^{-k(f) d_{i,j}}) \chi_{i,j}$ is the molecular absorption noise from TBS $j$ on CAV $i$.}

The molecular absorption coefficient of the isotopologue $t$ of gas $g$ for a molecular volumetric density at pressure $p$ and temperature $T$ can be defined as:   
\begin{equation}
    k(f) = \sum_{(t,g)} \frac{p^2 T_{\mathrm{sp}} q^{(t,g) } N_A S^{(t,g)} f \tanh{\left(\frac{h c f}{2 k_b T} \right)}}{p_0 T_k VT^2 f^{(t,g)}_c \tanh{\left(\frac{h c f^{(t,g)}_c}{2 k_b T} \right)}}F^{(t,g)}(f),
\end{equation}
where $p$ is the ambient pressure of the transmission medium, $p_0$ is the reference pressure (1 atm), $T$ is the temperature of the transmission medium, $T_{\mathrm{sp}}$ is the temperature at standard pressure, $q^{(t,g)}$ is the mixing ratio of gasses, $N_A$ is Avogadro's number, and V is the gas constant. $S^{(t,g)}$ is the line intensity which is the strength of the absorption by a specific type of
molecules, $f$ is the frequency of the EM wave, $f^{(t,g)}_{c}$ is the resonant frequency of gas $g$, $h$ is Planck's constant, and $k_b$ is the Boltzmann constant. Regarding the frequency $f$, the Van Vleck-Weisskopf asymmetric line shape is considered as follows:
\begin{equation}
    F^{(t,g)}(f) \!=\! \frac{100 c \alpha^{(t,g)}f}{\pi f^{(t,g)}_c} \!\! \left(\!\frac{1}{G^2 + \left(\alpha^{(t,g)} \right)^2} + \frac{1}{H^2 + \left(\alpha^{(t,g)} \right)^2} \! \right)\!,
\end{equation}
where $G = f + f_c^{(t,g)}$, $H = f - f_c^{(t,g)}$, and finally the Lorentz
half-width is given as follows:
\begin{equation}
    \alpha^{(t,g)} = \left(\left(1-q^{(t,g)} \right) \alpha^{(t,g)}_{\mathrm{air}} + q^{(t,g)} \alpha^{(t,g)}_0 \right) \left(\frac{p}{p_0} \right) \left(\frac{T_0}{T} \right)^\gamma,
\end{equation}
where $T_0$ is the reference temperature, the parameters air half-widths $\alpha^{(t,g)}_{\mathrm{air}}$, self-broadened half-widths $\alpha^{(t,g)}_{0}$, and temperature broadening coefficient $\gamma$ are obtained directly from the HITRAN database \cite{ROTHMAN2009533}.

{Furthermore, we note that the distance between a CAV $i$ and TBS $j$ can be calculated as 
    $d_{i,j} = \sqrt{x_{i,j}^2 + h_{\mathrm{bs}}^2 + d_{\mathrm{safe}}^2},$
where $h_{\mathrm{bs}}$ is the height of the TBSs, $d_{\mathrm{safe}}$ is the safety distance from the CAV on the road to the TBS, and $x_{i,j}$ is the distance across the $x$ axis from the TBS to the CAV's location . In addition, the traditional data rate without consideration of HOs between a TBS and CAV can be defined using Shannon's theorem as
    $R_{i,j} = W\log_2(1 + \text{SINR}_{i,j}) $,
where the bandwidth of the channel is represented by $W$.}

\section{HO-Aware Rate Outage Analysis}
In this section, we consider a THz vehicular network to first derive the HO-aware data rate, and then derive novel expressions for the PDF and CDF of the HO-aware rate outage probability and provide tractable closed-form expressions. 

To derive the HO-aware data rate, we first consider the HO-cost which is linear with respect to both HO rate (HOs per second) and HO delay (secs per HO) \cite{7571101} as 
   $ H_c = h_d\times H.$
$H$ is the handoff rate which is defined as the number of cell boundaries crossed per second. Since we are assuming nearest TBS association with equidistant TBSs, each boundary will be the same. Therefore, we formulate the HO rate as $H = \mu v$. Finally, the HO-aware data rate \cite{7827020} is formulated as: 
{
\begin{equation}
    \label{data rate final}
    M_{i,j} = R_{i,j}(1 -  H_{c, \text{max}}) = R_{i,j}(1-\min\{H_d, 1 \}),
\end{equation}
where $H_{c, \text{max}} =\min\{H_d, 1 \}$ and $H_d = h_d \mu v$.
Note that when the HO cost is greater than one, the delay caused by the handover exceeds the duration the CAV is connected to a TBS, resulting in no data transmission \cite{HO-Ref}.}
Furthermore, we introduce a minimum HO-aware data rate $R_{\text{th}}$ which will ensure a secure and consistent connection from a CAV to TBS so that each CAV attains the mandatory QoS.

The  HO-aware rate outage probability $P_{\mathrm{out}}$ is defined as:
\begin{align}
\label{outage mobile}
    P_\mathrm{out} &= \mathrm{Pr}\left(  M_{i,j} \leq R_{\mathrm{th}} \right) = \mathrm{Pr}\left(Z = \frac{S_{i,j}}{I_{i} + \phi_i + N} \leq \gamma_{\mathrm{th}} \right),
\end{align}
{where $\gamma_{\mathrm{th}}$ is the desired SINR threshold given as  $\gamma_{\mathrm{th}} = 2^{\frac{R_{\mathrm{th}}}{W(1-H_{c, \mathrm{max}})}}-1$.} 

To derive the rate outage, we initially derive the PDF and CDF of $S_{i,j}$. The random variable $S_{i,j}$ is a scaled Gamma random variable, i.e.,  $S_{i,j} = Y = a_{i,j}\chi_{i, j}$, where $a_{i,j} = P^{\text{tx}}_j \zeta d^{-2}_{i,j}e^{-k(f) d_{i,j}}$. With the use of a single variable transformation method, we derive the PDF and CDF of $Y$ as:
\begin{equation}
\label{s pdf}
    f_{Y}(y) = \frac{\beta_1^{m_j} y^{m_j-1}}{\Gamma(m_j)}e^{-\beta_1 y}, \quad
    F_{Y}(y) = \frac{\gamma\left(m_j, {y}{\beta_1} \right)}{\Gamma(m_s)},
\end{equation}
where $\beta_1 = m_j/a_{i,j}$ and $m_j$ is flexible fading parameter, {$\Gamma(.)$ and $\gamma(.,.)$ are Gamma and lower incomplete Gamma functions, respectively.}

Since the thermal noise is typically negligible compared to molecular absorption noise, we consider deriving the PDF of 
$$C_i = I_{i} + \phi_i = \sum_{ k=1, k \neq j}^{k=N} b_{i,k} \chi_{i,k} + b_{i,j} \chi_{i,j}.$$
where $b_{i,k} = P^{\mathrm{tx}}_k \zeta  F d^{-2}_{i,k}$ and $b_{i,j} = P^{\text{tx}}_j \zeta d^{-2}_{i,j}(1-e^{-k(f) d_{i,j}})$.
\begin{lemma}[Statistics of Interference-plus-Molecular Absorption Noise] 
    The interference-plus-molecular absorption noise $C_{i}$ follows a sum of $N$  scaled gamma random variables \cite{ansari2017new}, i.e., we get a weighted sum of $N$ independent but non-identical Gamma random variables with fading parameter $m_i$.  {By using the \textit{Welch-Satterthwaite approximation}   given in \cite{hina4,hina5}, the pdf of $X=C_i$ is given as follows:}
\begin{equation}
    f_X(x) = \frac{\beta_2^p x^{p-1}}{\Gamma(p)}e^{-\beta_2 x}
\end{equation}
where
\begin{align}
    \beta_2 &= \frac{m_1 \bar{b}_{i, 1} + m_2 \bar{b}_{i, 2} + \dots + m_N \bar{b}_{i, N}}{m_1 \bar{b}^2_{i, 1} + m_2 \bar{b}^2_{i, 2} + \dots + m_N \bar{b}^2_{i, N}}\\
    p &= \frac{(m_1 \bar{b}_{i, 1} + m_2 \bar{b}_{i, 2} + \dots + m_N \bar{b}_{i, N})^2}{m_1 \bar{b}^2_{i, 1} + m_2 \bar{b}^2_{i, 2} + \dots + m_N \bar{b}^2_{i, N}},
\end{align}
where $\bar{b}_{i,k} = b_{i,k}/m_k$, $\forall \ k \in \{1,...N-1\}$, and $\bar{b}_{i,j} = b_{i,j}/m_j$.
\end{lemma}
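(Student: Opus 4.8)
The plan is to obtain the two parameters $p$ and $\beta_2$ from the moment-matching principle that underlies the Welch--Satterthwaite approximation: the weighted sum $C_i$ of independent Gamma variables is replaced by a single Gamma variable whose first two moments coincide with those of $C_i$. Because a sum of Gamma random variables with distinct rate parameters is not itself Gamma, this is an approximation rather than an exact identity, so the substance of the lemma is simply the closed form of the matched shape and rate.

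First I would record the per-term statistics. From the parametrization used for the serving link in \eqref{s pdf}, each normalized fading power $\chi_{i,k}$ is Gamma distributed with shape $m_k$, unit mean, and variance $1/m_k$; hence the summand $b_{i,k}\chi_{i,k}$ has mean $b_{i,k}$ and variance $b_{i,k}^2/m_k$. Invoking the assumed independence of the fading powers across the serving and interfering links, I would then add means and variances term by term to get
\begin{equation}
    E[C_i] = \sum_{k} b_{i,k}, \qquad \mathrm{Var}(C_i) = \sum_{k} \frac{b_{i,k}^2}{m_k}.
\end{equation}
Next I would impose the matching conditions $p/\beta_2 = E[C_i]$ and $p/\beta_2^2 = \mathrm{Var}(C_i)$ for a candidate $\mathrm{Gamma}(p,\beta_2)$ law and solve them, yielding $\beta_2 = E[C_i]/\mathrm{Var}(C_i)$ and $p = (E[C_i])^2/\mathrm{Var}(C_i)$. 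Finally, substituting $\bar{b}_{i,k} = b_{i,k}/m_k$ rewrites $\sum_k b_{i,k}$ as $\sum_k m_k \bar{b}_{i,k}$ and $\sum_k b_{i,k}^2/m_k$ as $\sum_k m_k \bar{b}_{i,k}^2$, which reproduces the stated expressions for $\beta_2$ and $p$ verbatim.

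The main obstacle is bookkeeping rather than analysis: I must stay consistent about the Gamma parametrization (shape-and-rate, as in \eqref{s pdf}) and about the unit-mean normalization of each $\chi_{i,k}$, since an error in either would propagate a spurious factor of $m_k$ through both $p$ and $\beta_2$. A secondary point worth stating explicitly is that the approximation preserves only the first two moments, so its accuracy for the tail of $C_i$ --- which is what ultimately governs the rate-outage probability --- should be confirmed numerically, as the paper does via Monte-Carlo simulation.
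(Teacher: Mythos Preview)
Your proposal is correct and is precisely the moment-matching argument that constitutes the Welch--Satterthwaite approximation. The paper itself does not supply a proof of this lemma: it simply states the approximating Gamma density and its parameters, citing the references for the Welch--Satterthwaite method, so your derivation is in fact more explicit than what appears in the text while arriving at the same formulas.
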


In the following lemma, we derive the outage expression.
\begin{lemma}[HO-aware Rate Outage Probability] Given the PDF  of signal $Y =S_{i,j}$ and interference-plus-molecular absorption noise $X = C_{i,j}$ of $i$th CAV, the closed-form outage expression can be given as follows:
  \begin{align}
\label{pout}
     P_\mathrm{out} &= I\left(\frac{\gamma_\mathrm{th}}{\gamma_\mathrm{th} + \frac{\beta_2}{\beta_1} }, m_j, p\right) = \frac{B \left(\frac{\gamma_{\mathrm{th}}}{\gamma_{\mathrm{th}} + \frac{\beta_2}{\beta_1} }, m_j, p\right)}{B(m_j, p)},
\end{align}  
where $I{(\cdot, \cdot, \cdot)}$ is the regularized Beta function. $B{(\cdot, \cdot, \cdot)}$ is the incomplete Beta function and  $B(\cdot, \cdot)$ is the Beta function.
\end{lemma}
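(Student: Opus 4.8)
The plan is to exploit the independence of the desired signal $Y = S_{i,j}$ and the interference-plus-absorption term $X = C_i$, both of which are Gamma-distributed by \eqref{s pdf} and Lemma~1, and to identify the ratio $Z = Y/X$ (with the negligible thermal noise dropped, as justified above) as a generalized Beta prime random variable. First I would obtain the PDF of $Z$ through the standard ratio transformation
\begin{equation}
    f_Z(z) = \int_0^\infty x\, f_Y(zx)\, f_X(x)\, \mathrm{d}x,
\end{equation}
substituting the two Gamma densities and recognizing the inner integral over $x$ as an elementary Gamma integral. This collapses to
\begin{equation}
    f_Z(z) = \frac{1}{B(m_j, p)}\, \frac{\beta_1^{m_j}\beta_2^{p}\, z^{m_j-1}}{(\beta_1 z + \beta_2)^{m_j + p}},
\end{equation}
which is precisely the density of a generalized Beta prime distribution with shape parameters $m_j$ and $p$ and scale $\beta_2/\beta_1$, matching the claim anticipated in the abstract and introduction.

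Next I would integrate this density over $[0, \gamma_{\mathrm{th}}]$ to form the outage probability $P_{\mathrm{out}} = \mathrm{Pr}(Z \le \gamma_{\mathrm{th}})$. The decisive manoeuvre is the change of variable $t = \beta_1 z/(\beta_1 z + \beta_2)$, which maps $z \in [0, \gamma_{\mathrm{th}}]$ onto $t \in [0,\, \gamma_{\mathrm{th}}/(\gamma_{\mathrm{th}} + \beta_2/\beta_1)]$. Under this substitution, together with the Jacobian $\mathrm{d}z = (\beta_2/\beta_1)(1-t)^{-2}\,\mathrm{d}t$, the accumulated powers of $\beta_1$ and $\beta_2$ cancel completely and the integrand reduces to the canonical Beta kernel $t^{m_j-1}(1-t)^{p-1}$. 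The integral therefore becomes an incomplete Beta integral with upper limit $\gamma_{\mathrm{th}}/(\gamma_{\mathrm{th}} + \beta_2/\beta_1)$, and dividing by $B(m_j, p)$ yields the regularized incomplete Beta function evaluated at that limit, which is exactly \eqref{pout}.

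I expect the only delicate point to be the bookkeeping in this change of variable: one must verify that the powers of $\beta_1$ and $\beta_2$ arising from $z^{m_j-1}$, from $(\beta_1 z + \beta_2)^{-(m_j+p)}$, and from the Jacobian indeed cancel to leave a scale-free Beta integrand, so that all dependence on $\beta_2/\beta_1$ is carried solely by the upper limit of integration. An alternative route that bypasses computing $f_Z$ explicitly would be to condition on $X$ and write $P_{\mathrm{out}} = \int_0^\infty F_Y(\gamma_{\mathrm{th}} x)\, f_X(x)\,\mathrm{d}x$, then invoke a tabulated integral of a lower incomplete Gamma function against a Gamma density; however, the ratio-density approach is cleaner because it exposes the Beta prime structure directly and renders the final substitution transparent.
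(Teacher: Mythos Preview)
Your proposal is correct and follows essentially the same route as the paper: both identify $Z=Y/X$ as a generalized Beta prime random variable and read off its CDF as the regularized incomplete Beta function evaluated at $\gamma_{\mathrm{th}}/(\gamma_{\mathrm{th}}+\beta_2/\beta_1)$. The only difference is that the paper simply cites a reference for the ratio-of-Gammas result and states the PDF and CDF directly, whereas you derive them explicitly via the ratio transformation and the substitution $t=\beta_1 z/(\beta_1 z+\beta_2)$; your bookkeeping is correct.
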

\begin{proof}

Given that $X \sim \mathrm{Gamma}(p,\beta_2)$ and $Y \sim \mathrm{Gamma}(m_j,\beta_1)$,  $Z= Y/X \sim \mathrm{B}(m_j,p,1, \beta_2/\beta_1)$ follows Generalized Beta prime distribution \cite{Ratio_Of_Gamma}. The PDF of $Z$ can thus be given as follows:
\begin{equation} \label{pdf z equation}
    f_Z(z) = \frac{\beta_1\left(\frac{z}{\beta_2/\beta_1} \right)^{m_j - 1}\left(1 + \frac{z}{\beta_2/\beta_1} \right)^{-m_j - p}}{\beta_2 B(m_j, p)},
\end{equation}
The CDF of $Z$ can then be given as follows:
\begin{align}
\label{z cdf equation}
      F_Z(z) = I\left(\frac{z}{z + \frac{\beta_2}{\beta_1} }, m_j, p\right) = \frac{B \left(\frac{z}{z + \frac{\beta_2}{\beta_1} }, m_j, p\right)}{B(m_j, p)}
\end{align}
Finally, the outage in \textbf{Lemma~2} can be calculated by substituting $z=\gamma_{\mathrm{th}}$ in \eqref{z cdf equation}. Note that $\gamma_{\mathrm{th}}$ is a function of velocity.
\end{proof}

For further tractability, we derive a worst-case rate outage expression in which the worst-case interference is considered at the CAV. This happens when the CAV is halfway between any pair of TBSs (i.e. $\frac{1}{2\mu}$), i.e., $$d^{\mathrm{worst}}_{i,k} = \sqrt{h_{\mathrm{bs}}^2 + d_{\mathrm{safe}}^2 + \frac{(2 k+1)^2}{4 \mu ^2}}.$$
Subsequently, the worst-case rate outage probability can be given as follows: $$  P^{\mathrm{worst}}_{\mathrm{out}}  = \frac{B \left(\frac{\gamma_{\mathrm{th}}}{\gamma_{\mathrm{th}} + \frac{\beta_2^{\mathrm{worst}}}{\beta_1} }, m_j, p^{\mathrm{worst}}\right)}{B(m_j, p^{\mathrm{worst}})},$$
where 
$$ \beta_2^{\mathrm{worst}}= \frac{ \sum_{ k=1, k \neq j}^{k=N} \bar{b}^{\mathrm{worst}}_{i,k} m_k + \bar{b}_{i,j} m_j
}{ \sum_{ k=1, k \neq j}^{k=N} (\bar{b}_{i,k}^{\mathrm{worst}})^2 m_k + \bar{b}^2_{i,j} m_j},$$ and 
    $$ p^{\mathrm{worst}} = \beta_2^{\mathrm{worst}}\left(\sum_{ k=1, k \neq j}^{k=N} \bar{b}^{\mathrm{worst}}_{i,k} m_k + \bar{b}_{i,j} m_j\right). $$
    where $\bar{b}^{\mathrm{worst}}_{i,k} = P^{\mathrm{tx}}_k \zeta  F (d_{i,k}^{\mathrm{worst}})^{-2}/m_k$.
\section{Traffic flow Maximization with Maximum Rate Outage and Minimum Traffic Flow  Constraints}
In this section, we formulate the traffic flow maximization problem with constraints including collision avoidance, minimum traffic flow, and maximum rate outage so as to optimize CAV speed $v$ and TBS density $\mu$. We derive a closed-form solution for the optimal speed and a numerical search approach to obtain TBS density.
The optimization problem of maximizing traffic flow is stated in the following.
\begin{align}
 &(\textbf{P1}) \quad \max_{\mu, v}\qquad Q = v \exp\left(\frac{\sigma^2_{\textrm{LN}}-2\mu_{\textrm{LN}}}{2}\right) \nonumber\\
    \mathrm{s.t.} & (\textbf{C1})\quad  v\leq \frac{\exp{\left(\sigma_{\mathrm{LN}} \sqrt{2} \erf^{-1}{\left(2 \epsilon - 1 \right) + \mu_{\mathrm{LN}}} \right)}}{\tau} =V_{\mathrm{safe}}\nonumber\\&
    \textbf{(C2)} \quad v\geq  \frac{Q_{\mathrm{min}}}{\exp{\left(\frac{\sigma^2_{\mathrm{LN}} - 2\mu_{\mathrm{LN}}}{2} \right)}} = V_{\mathrm{flow}} \nonumber\\&
    (\textbf{C3}) \quad  { I\left(\!\frac{\gamma_{\mathrm{th}}(\mu,v)}{\gamma_{\mathrm{th}}(\mu,v) + \frac{\beta_2^{\mathrm{worst}}(\mu,v)}{\beta_1} }, m_j, p^{\mathrm{worst}}(\mu,v)\!\!\right) \! \leq \! O_{\mathrm{th}} } \nonumber\\&
    (\textbf{C4}) \quad 0 < v \leq V_{\text{max}} \nonumber
    \\&
    (\textbf{C5}) \quad 0 \leq \mu \leq \mu_{\text{max}}  \nonumber
\end{align}
\textbf{C1} is the collision avoidance constraint which guarantees that the speed of each CAV does not exceed the safe speed as: 
\begin{equation}
\label{avoidance}
    \mathrm{\mathrm{Pr}}\left(v\geq \frac{s}{\tau}\right) = \mathrm{\mathrm{Pr}}\left(s \leq v \tau \right) \leq \epsilon,
\end{equation}
where a probability is considered since $s$ is a random variable, $\tau$ is the processing decision time for the CAVs, and $\epsilon$ is the crash intensity level. Since the expression is in terms of a probability, we substitute the CDF of $s$ to derive a closed-form expression as follows: $\mathrm{\mathrm{Pr}}\left(v\geq \frac{s}{\tau}\right) = \frac{1}{2}\left(1 + \mathrm{erf}\left(\frac{\ln{(v\tau)}-\mu_{\mathrm{LN}}}{\sigma_{\mathrm{LN}} \sqrt{2}} \right)\right)$. Finally, to simplify the optimization problem, we isolate $v$ in the form which is written in \textbf{C1}. In terms of this constraint, it represents the safe speed that keeps the crash intensity level below $\epsilon$ where $\tau$ and $\epsilon$ can be changed to be more strict or more lenient in regards to the probability of a CAV collision. 

Furthermore, \textbf{C2} is the minimum traffic flow constraint where we rearrange \eqref{avg flow} to isolate $v$ in a similar manner as \textbf{C1}. This constraint states that the speed should be above a given minimum guaranteed traffic flow.  
Furthermore, to ensure that the rate outage does not exceed a maximum limit, we introduce \textbf{C3} in which $O_{\mathrm{th}}$ is the maximum allowed rate outage limit.
Finally, we include \textbf{C4} which cuts the CAV speed at a maximum speed limit, and \textbf{C5} which sets a limit on the maximum TBS density.

{We observe that \textbf{P1} is a non-linear programming problem because of constraint \textbf{C3} which is a non-linear function of $\mu$ and $v$. However, we note that the objective function increases monotonically with the increase in $v$. Therefore, in order to maximize the velocity, we formulate the velocity in constraint \textbf{C3}, i.e., $V_{\textrm{data}}$ as a non-linear function of $\mu$.}
 
{For a given $O_{\textrm{th}}$, $V_{\textrm{data}}$ and $\mu$ are obtained as follows:}
{
\begin{equation}
    V_{\textrm{data}}=f(\mu) = \frac{1-\frac{R_{\textrm{th}}}{W \log_2 (1+\gamma(\mu))}}{h_d \mu},
\end{equation}
}where 
$$\gamma(\mu) = \frac{\frac{\beta_2^{\mathrm{worst}}(\mu)}{\beta_1} A(\mu)}{1-A(\mu)}, \ \ A(\mu) =  I^{-1}(O_{\textrm{th}} , m_j, p(\mu)),$$
and $I^{-1}(.,.,.)$ denotes the inverse of regularized beta function. 
By using \texttt{fminbnd} solver in MATLAB, which utilizes golden-section search algorithm, we can obtain the optimal $\mu^*$ that maximizes $V_{\textrm{data}} = f(\mu^*)$ for a given $O_{\textrm{th}}$.

\begin{lemma}
Since $Q$ is increases linearly with $v$, and $v$ is restricted from $V_\mathrm{max}$, $V_\mathrm{safe}$, and $V_\mathrm{data}$, we derive the optimal speed $v^*(\mu)$ which maintains each constraint as demonstrated
\begin{equation}
 \label{Q opt prob}
v^*=
    \begin{cases}
    V_\mathrm{data} &  V_\mathrm{flow} < V_\mathrm{data} < \min\{V_\mathrm{max}, V_\mathrm{safe}\}\\
  \min\{V_\mathrm{max}, V_\mathrm{safe}\} &   V_\mathrm{flow} < \min\{V_\mathrm{max}, V_\mathrm{safe}\} < V_\mathrm{data}  \\
    \end{cases}
\end{equation}

\end{lemma}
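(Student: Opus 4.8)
The plan is to exploit the fact that the objective is monotone in $v$ for fixed $\mu$, so that the optimization collapses to selecting the largest feasible speed. First I would note that $Q = v \exp\!\left((\sigma^2_{\mathrm{LN}} - 2\mu_{\mathrm{LN}})/2\right)$ is a strictly increasing linear function of $v$, because the exponential prefactor is a positive constant that does not depend on $v$. Hence maximizing $Q$ over $v$ is equivalent to pushing $v$ to the upper edge of the feasible set, and the entire task reduces to characterizing the feasible interval for $v$ induced by constraints \textbf{C1}--\textbf{C5}.

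Next I would translate each constraint into a bound on $v$. Constraints \textbf{C1} and \textbf{C4} already furnish the explicit upper bounds $v \leq V_{\mathrm{safe}}$ and $v \leq V_{\mathrm{max}}$, while \textbf{C2} furnishes the lower bound $v \geq V_{\mathrm{flow}}$. The remaining work is to convert the rate-outage constraint \textbf{C3} into an upper bound. The key observation is that the worst-case parameters $\beta^{\mathrm{worst}}_2$ and $p^{\mathrm{worst}}$ enter only through the geometry $d^{\mathrm{worst}}_{i,k}$, which depends on $\mu$ alone, so the sole channel through which $v$ affects \textbf{C3} is the handoff-inflated threshold $\gamma_{\mathrm{th}}$. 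Since $H_d = h_d \mu v$ grows with $v$, the factor $1 - H_{c,\mathrm{max}}$ shrinks, so the exponent in $\gamma_{\mathrm{th}} = 2^{R_{\mathrm{th}}/(W(1 - H_{c,\mathrm{max}}))} - 1$ grows and $\gamma_{\mathrm{th}}$ is increasing in $v$ on the operating regime $H_d < 1$. Because the first argument $\gamma_{\mathrm{th}}/(\gamma_{\mathrm{th}} + \beta^{\mathrm{worst}}_2/\beta_1)$ of the regularized Beta function is increasing in $\gamma_{\mathrm{th}}$, and $I(\cdot, m_j, p^{\mathrm{worst}})$ is increasing in its first argument, the worst-case outage $P^{\mathrm{worst}}_{\mathrm{out}}$ is strictly increasing in $v$. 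Consequently \textbf{C3} is equivalent to $v \leq V_{\mathrm{data}}$, where $V_{\mathrm{data}} = f(\mu)$ is the unique speed at which \textbf{C3} binds, obtained by inverting the regularized Beta function exactly as in the derivation preceding the lemma.

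With all constraints expressed as bounds, the feasible set for $v$ is the interval $[V_{\mathrm{flow}}, \min\{V_{\mathrm{max}}, V_{\mathrm{safe}}, V_{\mathrm{data}}\}]$, which is assumed nonempty. Since $Q$ is strictly increasing in $v$, the maximizer is the right endpoint $v^* = \min\{V_{\mathrm{max}}, V_{\mathrm{safe}}, V_{\mathrm{data}}\}$. Finally I would split this minimum into the two stated cases by comparing $V_{\mathrm{data}}$ against the network-side limit $\min\{V_{\mathrm{max}}, V_{\mathrm{safe}}\}$: when $V_{\mathrm{data}}$ is the binding upper bound and exceeds $V_{\mathrm{flow}}$, we obtain $v^* = V_{\mathrm{data}}$; otherwise $\min\{V_{\mathrm{max}}, V_{\mathrm{safe}}\}$ binds, giving $v^* = \min\{V_{\mathrm{max}}, V_{\mathrm{safe}}\}$, which recovers the piecewise expression in \eqref{Q opt prob}.

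The step I expect to require the most care is the conversion of \textbf{C3} into the clean bound $v \leq V_{\mathrm{data}}$, since $v$ enters only implicitly through $\gamma_{\mathrm{th}}$. The simplification is that $\beta^{\mathrm{worst}}_2$ and $p^{\mathrm{worst}}$ are $v$-independent, so once one confirms that $\gamma_{\mathrm{th}}$ is strictly increasing in $v$ and that $I(\cdot, m_j, p^{\mathrm{worst}})$ is strictly increasing in its first argument, the outage is strictly increasing in $v$, and inverting it yields a well-defined single-valued $V_{\mathrm{data}}(\mu)$ that makes the case analysis exhaustive.
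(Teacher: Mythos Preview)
Your proposal is correct and follows the same reasoning the paper uses: the paper does not supply a separate proof for this lemma but embeds the argument in the surrounding text and in the lemma statement itself, observing that $Q$ is monotone increasing in $v$ so that the optimal speed is the largest feasible one, with \textbf{C3} inverted to yield the upper bound $V_{\mathrm{data}}=f(\mu)$. Your write-up is more careful than the paper's, in particular your explicit verification that $\beta_2^{\mathrm{worst}}$ and $p^{\mathrm{worst}}$ are $v$-independent and that $P_{\mathrm{out}}^{\mathrm{worst}}$ is strictly increasing in $v$ through $\gamma_{\mathrm{th}}$, which the paper leaves implicit.
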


\section{Numerical Results and Discussions}
{In this section, we validate the accuracy of derived expressions  by comparing the analysis and Monte-Carlo simulations. We also demonstrate the effect of important traffic flow and THz network parameters such as crash probability level, TBS density, data rate threshold, and molecular absorption.}

The system parameters  will be consistent from hereon unless it is stated otherwise. $V_{\mathrm{max}} = 30$ m/s, $\mu_{\text{max}}=0.5 \ m^{-1}$, $h_d = 0.35$ s/HO, $W = 3$ GHz, $G^{\mathrm{tx}} = 25$ dB, $G^{\mathrm{rx}} = 25$ dB, $c = 3\times 10^{8}$ m/s, $f = 0.837$ THz, $P^{\textrm{tx}}_j = 0.2$ W, $F = 0.0069$, $m_j = 2$, $m_i = 2.5$, $\epsilon = 2\%$, $\tau = 5\times 10^{-3}$ sec, $\mu_{\mathrm{LN}} = 0$, $\sigma_{\mathrm{LN}}=1$, $d_{\mathrm{safe}} = 5$ m, $h_{\mathrm{bs}} = 8$ m, $L = 2000$ m. 

\begin{figure}
  \centering
\includegraphics[scale=0.52]{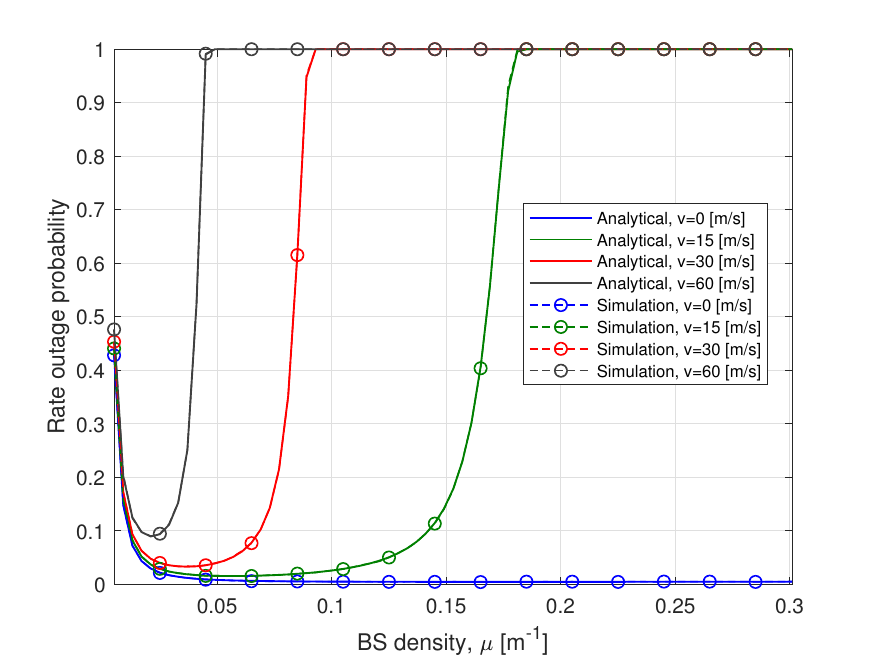}
\caption{Rate outage versus BS density, $\mu$, for different velocity. $R_{\textrm{th}}=1$ [Gbps]}
\label{fig:Outage-Vs-mu}
\end{figure}

\begin{figure}[t]
    \centering
     \subfloat[]{\includegraphics[scale=0.52]{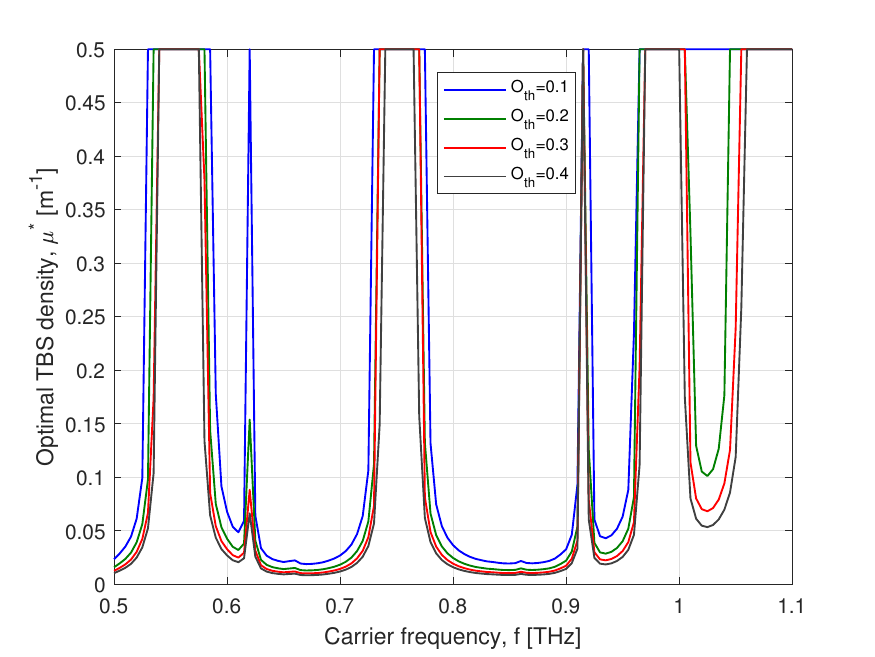}
    \label{fig:mu_opt_Vs_fTHz}}
    \vfill
    \subfloat[]{\includegraphics[scale=0.52]{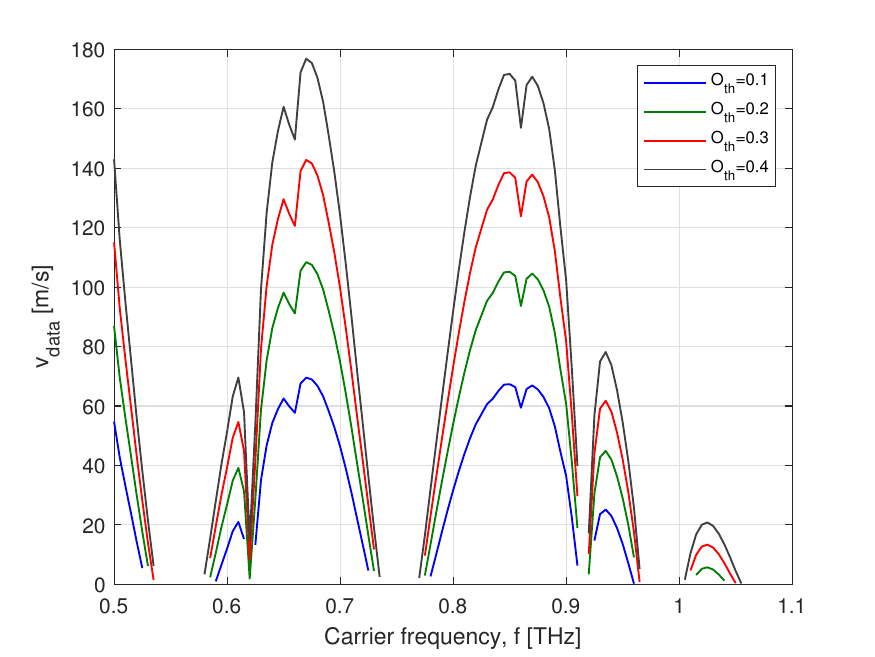}
    \label{fig:Vdata_Vs_fTHz}}
    \caption{Impact of using various carrier frequency in THz band on the optimal TBS density and $v_{\textrm{data}}$ for different rate outage threshold $O_{\textrm{th}}$. $R_{\textrm{th}}=1$ [Gbps], (a) Optimal TBS density $\mu^*$, (b) $V_{\textrm{data}}$}
    \label{fig:mu_opt_Vdata_Vs_fTHz}
\end{figure}

\begin{figure}[t]
    \centering
     \subfloat[]{\includegraphics[scale=0.52]{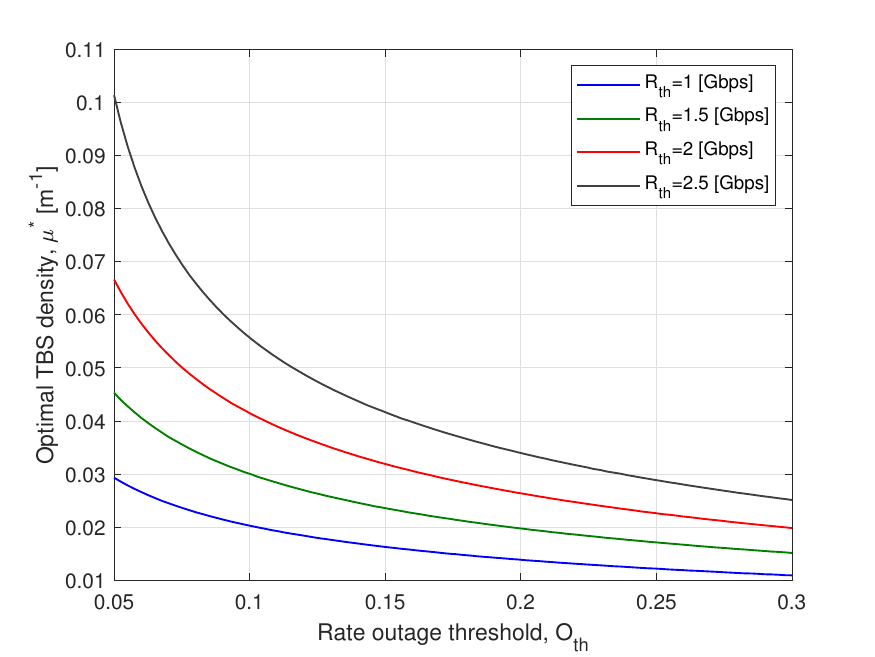}
    \label{fig:Vdata_Vs_Oth}}
    \vfill
    \subfloat[]{\includegraphics[scale=0.52]{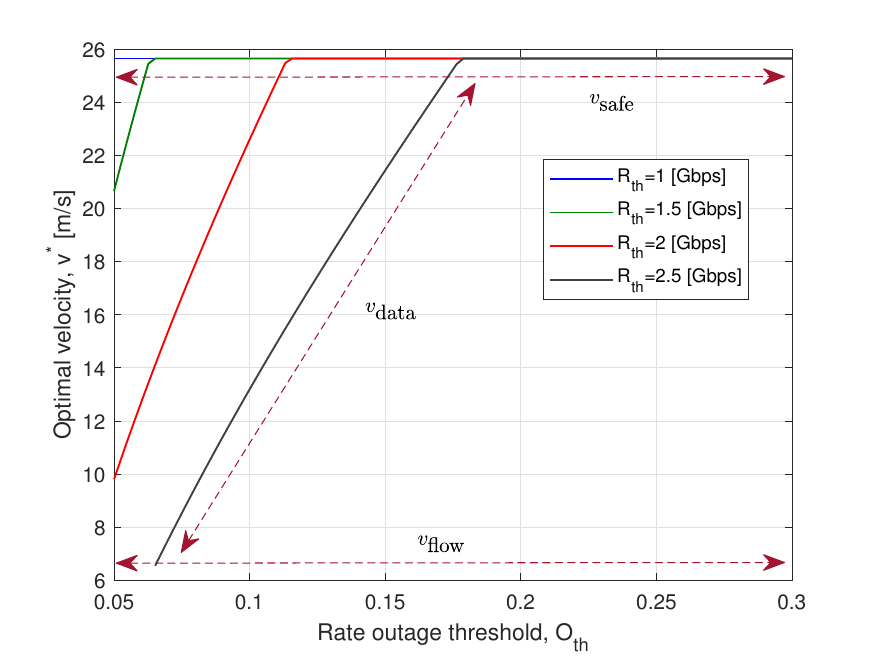}
    \label{fig:mu_opt_Vs_Oth}}
    \caption{Impact of increasing minimum rate outage threshold on the optimal TBS density and optimal velocity $v^*$ for different minimum rate requirement, $R_{\textrm{th}}$ (a) Optimal TBS density $\mu^*$, (b) Optimal velocity $v^*$}
    \label{fig:mu_opt_Vdata_opt_Vs_Oth}
\end{figure}

\begin{figure}
  \centering
\includegraphics[scale=0.52]{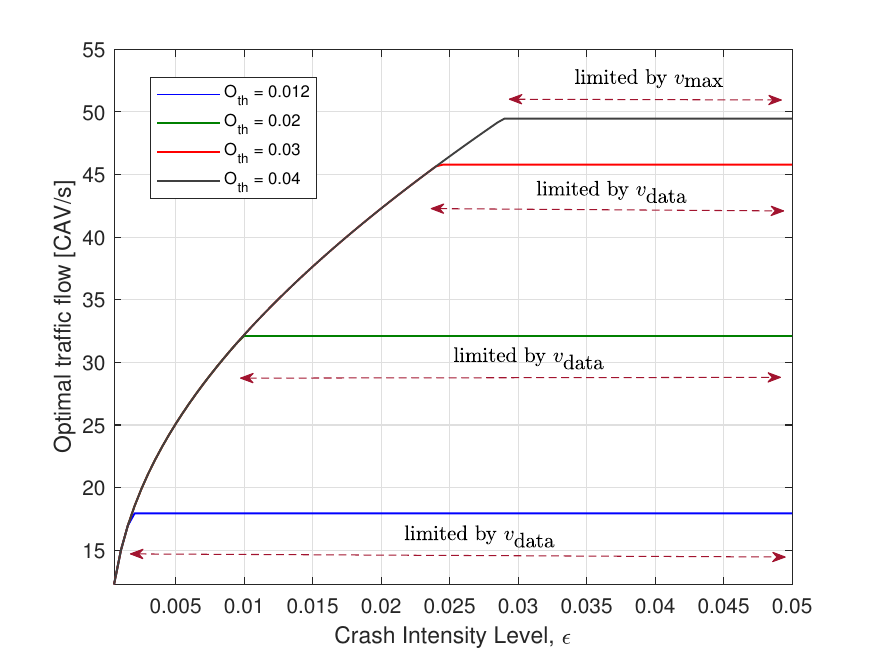}
\caption{Optimal traffic flow versus crash intensity level. $Q_{\min}=10$ [CAV/s], $R_{\textrm{th}}=1$ [Gbps]}
\label{fig:Q_traffic-Vs-epsCrash}
\end{figure}


Fig.~\ref{fig:Outage-Vs-mu} demonstrates the THz rate outage probability as a function of $\mu$ for different CAV speeds $v$. The analytical expression matches perfectly with the simulation results. We observe that when $\mu$ increases, at first the outage decreases due to improved signal strength from TBSs; however, the outage starts to increase beyond a certain point due to an increase in the overall interference and in the amount of HOs which lowers the data rate. We note that the outage probability increases with higher CAV speeds for similar reasons.

{Fig.~\ref{fig:mu_opt_Vdata_Vs_fTHz} illustrates the effect of utilizing various carrier frequencies in the THz band on the optimal TBS density and $V_{\textrm{data}}$. With the higher values of rate outage threshold, $\mu$  needs to be very high and and $V_{\textrm{data}}$ needs to be very low (such that the problem becomes infeasible most of the time) to meet the outage limits at frequencies within the regions of high molecular absorption. Moreover, Fig.~\ref{fig:Vdata_Vs_fTHz} demonstrates that by selecting appropriate transmission windows, THz transmissions can allow for much higher CAV velocities while maintaining the outage threshold and minimizing the corresponding optimal TBS density. 
}

Fig.~\ref{fig:Vdata_Vs_Oth} depicts optimal density of BSs required as a function of rate outage limit $O_{\mathrm{th}}$. As the rate outage tolerance increases, the velocity threshold of CAV $V_{\mathrm{data}}$ increases from constraint \textbf{C3}. That is, the higher rate outage threshold  will allow a CAV to move faster with a less reliable connectivity. However, this increase in speed results in more HOs; thus optimal TBS density reduces. The other observation is that if one wants to ensure a given outage threshold $O_{\mathrm{th}}$ but with improved data rates, more BSs need to be deployed, i.e, optimal TBS density increases. Fig.~\ref{fig:mu_opt_Vs_Oth} highlights the increased optimal velocity as a function of increasing rate outage limit $O_{\mathrm{th}}$. This initial trend of optimal velocity is due to increase in  $V_{\mathrm{data}}$. However, the optimal velocity is limited by $\min{(V_{\mathrm{max}}, V_{\mathrm{safe})}}$. This demonstrates the solution provided in \eqref{Q opt prob}

 Fig.~\ref{fig:Q_traffic-Vs-epsCrash} shows the optimal traffic flow as a function of the crash  tolerance level. The higher  crash tolerance level, the higher velocity  can be attained for CAVs due to increase in $V_{\mathrm{safe}}$ in constraint \textbf{C1} (i.e., of course at the expense of increased crashes). Furthermore, to achieve high connection reliability (i.e., reduced rate outage probability), the traffic flow will need to be sacrificed as can be noted. However, a minimum traffic flow is always guaranteed which is $Q_{\mathrm{min}} = 10$ CAVs per second.


\section{Conclusion}
{This paper presents a tractable mathematical model of HO-aware rate outage for a THz vehicular network. First, we derive the statistics of the SINR by deriving novel PDF and CDF expressions. Next, we formulate a traffic flow maximization problem considering traffic flow constraints and an outage probability constraint to obtain optimal TBS density and CAV speed. We derive a closed-form optimal CAV speed equation and numerical solution for TBS density. Finally, the numerical results verify the validity and accuracy of the derived expressions and extract insights regarding TBS density, rate outage probability, and the optimal CAV speed.}
\bibliographystyle{IEEEtran}
\bibliography{references}

\vfill
\end{document}